\documentclass[journal,romanappendices]{IEEEtran}
\usepackage{algorithm,algorithmic,amsmath,amssymb,amsthm,bbm,cite,color,comment,enumitem,graphicx,microtype,subcaption,setspace,url}
\usepackage{mathtools,xparse}
\usepackage[USenglish]{babel}
\usepackage{hyperref}
\usepackage[utf8]{inputenc}
\usepackage[T1]{fontenc}
\usepackage{float}
\usepackage[autostyle]{csquotes}
\usepackage[labelsep=period,font=small]{caption}
\usepackage[nolist]{acronym}
\usepackage[normalem]{ulem}
\newtheorem{lemma}{Lemma}
\newtheorem{proposition}{Proposition}
\newtheorem{theorem}{Theorem}

\usepackage{tikz,pgfplots}
\usetikzlibrary{arrows, arrows.meta, backgrounds, calc, intersections, decorations.markings, decorations.pathreplacing, positioning, shapes,}
\pgfplotsset{compat=1.17}
\usepgfplotslibrary{fillbetween}
\newcommand{\Cross}{\mathbin{\tikz [x=1.4ex,y=1.4ex,line width=.2ex] \draw (0,0) -- (1,1) (0,1) -- (1,0);}}%
\usepackage{caption}
\usepackage{acronym}

\usepackage{titlesec}
\let\subparagraph\relax
\renewcommand{\a}{\mathbf{a}}
\renewcommand{\b}{\mathbf{b}}

\renewcommand{\r}{\mathbf{r}}
\newcommand{\s}{\mathbf{s}}

\renewcommand{\v}{\mathbf{v}}
\newcommand{\w}{\mathbf{w}}

\newcommand{\0}{\mathbf{0}}

\newcommand{\A}{\mathbf{A}}

\renewcommand{\H}{\mathbf{H}}
\newcommand{\I}{\mathbf{I}}

\newcommand{\R}{\mathbf{R}}

\newcommand{\W}{\mathbf{W}}

\newcommand{\thetab}{\boldsymbol{\theta}}

\newcommand{\sigmab}{\boldsymbol{\sigma}}

\newcommand{\Xib}{\mathbf{\Xi}}

\newcommand{\Sigmab}{\mathbf{\Sigma}}

\newcommand{\setD}{\mathcal{D}}
\newcommand{\setE}{\mathcal{E}}

\newcommand{\setJ}{\mathcal{J}}

\newcommand{\setL}{\mathcal{L}}

\newcommand{\setN}{\mathcal{N}}

\newcommand{\setT}{\mathcal{T}}
\newcommand{\setU}{\mathcal{U}}

\newcommand{\Compl}{\mbox{$\mathbb{C}$}}
\newcommand{\Real}{\mbox{$\mathbb{R}$}}

\newcommand{\diag}{\mathrm{diag}}

\newcommand{\diff}{\mathrm{d}}
\newcommand{\Exp}{\mathbb{E}}
\newcommand{\herm}{\mathrm{H}}

\newcommand{\tr}{\mathrm{tr}}
\newcommand{\tran}{\mathrm{T}}

\renewcommand{\vec}{\mathrm{vec}}
\newcommand{\card}{\mathrm{card}}

\definecolor{red}{rgb}{1,0,0}
\definecolor{red_magenta}{rgb}{1,0,0.5}
\definecolor{magenta}{rgb}{1,0,1}
\definecolor{blue_magenta}{rgb}{0.5,0,1}
\definecolor{blue}{rgb}{0,0,1}
\definecolor{blue_cyan}{rgb}{0,0.5,1}
\definecolor{cyan}{rgb}{0,1,1}
\definecolor{green_cyan}{rgb}{0,1,0.5}
\definecolor{green}{rgb}{0,1,0}
\definecolor{green_yellow}{rgb}{0.5,1,0}
\definecolor{yellow}{rgb}{1,1,0}
\definecolor{red_yellow}{rgb}{1,0.5,0}
\begin{acronym}
\acro{1D}{one-dimensional}
\acro{2D}{two-dimensional}
\acro{3D}{three-dimensional}
\acro{ACF}{autocorrelation function}
\acro{AWGN}{additive white Gaussian noise}
\acro{DoF}{degrees of freedom}
\acro{EM}{electromagnetic}
\acro{i.i.d.}{independent and identically distributed}
\acro{ITU}{International Telecommunication Union}
\acro{LoS}{line-of-sight}
\acro{MIMO}{multiple-input multiple-output}
\acro{NLoS}{non-line-of-sight}
\acro{OFDM}{orthogonal frequency-division multiplexing}
\acro{PSD}{power spectral density }
\acro{PSF}{power spectral factor}
\acro{SNR}{signal-to-noise ratio}
\acro{vMF}{von Mises-Fisher}
\acro{WDM}{wavenumber-division multiplexing}
\acro{PDF}{probability density function}
\acro{KL}{Kullback–Leibler}
\end{acronym}

\title{Analysis of Non-Separable Scattering in WDM-Based NLoS Holographic MIMO
\author{Ashutosh Prajapati, Prathapasinghe Dharmawansa, and Italo Atzeni,~\IEEEmembership{Senior Member, IEEE}}
\thanks{The authors are with the Centre for Wireless Communications, University of Oulu, Finland (e-mail: \{ashutosh.prajapati, prathapasinghe.kaluwadevage, italo.atzeni\}@oulu.fi).}
\thanks{This work was supported by the Research Council of Finland (336449 Profi6, 348396 HIGH-6G, and 369116 6G~Flagship).} \vspace{-0.5mm}}

\begin{document}

\maketitle

\begin{abstract}
Although scattering is inherently non-separable, holographic multiple-input multiple-output channels are commonly modeled using separable scattering across the source and receiver. In this paper, we investigate non-separable scattering in a non-line-of-sight holographic line model based on wavenumber-division multiplexing. We model its power spectral factor (PSF) using a bivariate von Mises-Fisher (vMF) distribution and introduce a non-separability parameter. We prove that the marginal densities flatten and their Kullback--Leibler divergence from the uniform distribution decreases monotonically with non-separability. Using majorization theory, we further establish that an upper bound and an approximation at low signal-to-noise ratio (SNR) of the ergodic spectral efficiency increase monotonically. Numerical results validate the analysis and demonstrate the performance gains of non-separable scattering over its separable counterpart.
\end{abstract}

\vspace{-1.5mm}

\begin{IEEEkeywords}
Holographic MIMO, non-separable scattering, wavenumber-division multiplexing.
\end{IEEEkeywords}                               

\vspace{-1.5mm}

\section{Introduction}
\label{sec:Intro}

The rapid growth in wireless data demand is driving communication systems toward higher frequencies, where dense deployments of physically small antennas become feasible \cite{Atz25}. This trend has motivated interest in holographic \ac{MIMO} as a promising technology for future high-capacity networks and intelligent applications. Unlike conventional massive \ac{MIMO}, which typically employs antenna elements spaced about half a wavelength apart to limit spatial correlation \cite{wei2026electromagnetic}, holographic \ac{MIMO} employs continuous apertures, such as surfaces or lines, or densely packed antenna arrays, yielding electrically large apertures. Such systems can be implemented using programmable metamaterials \cite{huang2020holographic}, offering flexible hardware architectures and multifunctional operation across the time, frequency, and spatial domains.

A continuous-aperture \ac{MIMO} model based on the \ac{WDM} concept was first investigated under purely \ac{LoS} propagation in \cite{sanguinetti2022wavenumber}, while \ac{NLoS} propagation was later considered in \cite{praj25journal}. Although being predominantly \ac{LoS}, millimeter-wave and sub-terahertz systems can experience significant small-scale fading, making \ac{NLoS} components non-negligible. These components arise from scattering that is generally non-separable across the source and receiver. Nevertheless, separable scattering, which yields the widely used Kronecker channel model, has been extensively adopted in classical and holographic \ac{MIMO} studies \cite{pizzo2020spatially, pizzo2022fourier, zhang2023capacity, kermoal2002stochastic}. Standardized channel models, such as 3GPP TR~38.901 \cite{3gpp38901v1700}, likewise employ separable scattering functions at the source and receiver. Non-separable scattering was first investigated in the context of classical \ac{MIMO} by Weichselberger \cite{weichselberger2006stochastic}, who provided the fundamental intuition for modeling transmit-receive coupling. More recently, non-separable scattering was considered for holographic \ac{MIMO} in \cite{zhang2025fundamental}, where the separable scattering functions are coupled through a Gaussian kernel. However, this approach does not provide a general framework for modeling the \ac{PSF} under non-separable scattering.

In this paper, we investigate non-separable scattering in the \ac{WDM}-based \ac{NLoS} holographic line model of \cite{praj25journal} and analytically establish its performance gains over separable scattering. Specifically, we introduce a non-separability parameter and model the \ac{PSF} characterizing the \ac{NLoS} channel using the sine model of the bivariate \ac{vMF} distribution \cite{demchuk2002miscellanea}. We show that the corresponding marginal densities become progressively flatter as the non-separability parameter increases. We quantify this behavior using the \ac{KL} divergence from the uniform distribution and prove that this divergence decreases monotonically with the non-separability parameter. Furthermore, we derive an upper bound and a low-\ac{SNR} approximation for the ergodic spectral efficiency and use majorization theory to prove that both increase monotonically with the non-separability parameter. Lastly, we validate the theoretical findings through numerical results. While the analysis establishes the monotonic increase of the ergodic spectral efficiency at low \ac{SNR}, the numerical results demonstrate that this monotonic behavior persists throughout the considered \ac{SNR} range.

\smallskip

\textbf{\textit{Notation.}} Bold lowercase and uppercase letters denote~vectors and matrices, respectively. $(\cdot)^{*}$, $(\cdot)^{\tran}$, and $(\cdot)^{\herm}$ denote conjugate, transpose, and Hermitian transpose, respectively. $[\A]_{a,b}$ is the $(a,b)$-th entry of $\A$. $\|\cdot\|_{2}$, and $\card(\cdot)$, $\vec(\cdot)$, and $\operatorname{mat}_{a \times b} (\cdot)$ represent Euclidean norm, cardinality, vectorization, and vector-to-matrix reshaping, respectively. $\det(\cdot)$, $(\cdot)^{\frac{1}{2}}$, $\diag(\cdot)$, and $\I_n$ denote determinant, element-wise square root of a matrix, diagonalization, and $n$-dimensional identity matrix, respectively. For two vectors $\a, \b \in \Real_{+}^n$, where $\Real_{+}$ is the set of nonnegative real numbers, $\a\succ\b$ denotes that $\a$ majorizes $\b$ \cite{marshall1979inequalities}. $\odot$, $\otimes$, and $\Cross$ are the Hadamard, Kronecker, and Cartesian products, respectively. $\mathbbm{1}_{\mathcal{D}}(\cdot)$ is the indicator function equal to $1$ for $x\in\mathcal{D}$ and $0$ otherwise. $|\cdot|$, $\left\lfloor\cdot\right\rfloor$, and $\delta[\cdot]$ are the absolute value, floor, and Kronecker delta functions, respectively. $j=\sqrt{-1}$ is the imaginary unit. $\setN_{\mathbb{C}}(\0,\A)$ is the circularly symmetric complex Gaussian distribution with zero mean and covariance matrix $\A$. $\mathbb{E}[\cdot]$ denotes expectation. $\phi(\cdot)$ and $\Phi(\cdot)$ are the standard Gaussian \ac{PDF} and cumulative distribution function, respectively.

\section{System Model}
\label{sec: sys}

\begin{figure}[t]
    \centering
    \begin{tikzpicture}[>=latex]

\small

\def\dh{0.68cm}
\def\dv{0.68cm}
\node[] (center) at (0,0) {};

\draw[->, thick, red] (-3,0) -- (-3,0.75) node[midway, sloped, above, xshift=2pt] {\footnotesize $x$-axis};
\draw[->, thick, red] (-3,0) -- (-2.25,0) node[midway, sloped, below, xshift=2pt] {\footnotesize $z$-axis};
\fill[red] (-3,0) circle (1pt);
\draw[red, thick] (-3,0) circle (3pt);
\node[rotate=45, anchor=east] at (-3.05,-0.05) {\footnotesize \textcolor{red}{$y$-axis}};

\node[left=\dh of center] (center_left) {};
\node[above=1.5*\dv of center_left] (top_left) {};
\node[below=1.5*\dv of center_left] (bottom_left) {};
\node[right=\dh of center] (center_right) {};
\node[above=1.5*\dv of center_right] (top_right) {};
\node[below=1.5*\dv of center_right] (bottom_right) {};

\draw[very thick] (top_left) -- (bottom_left) node[midway, right=1pt] {};
\draw[very thick] (top_right) -- (bottom_right) node[midway, left=1pt] {};

\node[draw, minimum width=1cm, minimum height=0.7cm, fill=gray!20, ellipse, rotate=60, transform shape] () at ($(center)+(0,0)$) {scatterer};

\node[draw, circle, left=\dh/4 of center_left, align=center, inner sep=0pt] (+_left) {+};
\draw[->] (+_left.east) -- (center_left.center);
\node[right=\dh/4 of center_right, align=center, inner sep=0pt] (cross_right) {};
\draw[-] (center_right.center) -- (cross_right.center);

\node[draw, rectangle, above left=1.25*\dv and \dh/3 of +_left.center, align=center, text width=1cm] (phi_1) {$\phi_1(s_x)$};
\node[draw, rectangle, below left=1.25*\dv and \dh/3 of +_left.center, align=center, text width=1cm] (phi_2) {$\phi_N(s_x)$};
\node[align=center] () at ($(+_left)+(-\dv+0.05cm,0cm)$) {$\vdots$};
\draw[->] (phi_1.east) -- ($(phi_1.east)+(\dh/3,0)$) -- (+_left.north);
\draw[->] (phi_2.east) -- ($(phi_2.east)+(\dh/3,0)$) -- (+_left.south);

\node[draw, rectangle, above right=1.25*\dv and \dh/3 of cross_right.center, align=center, text width=2cm] (psi_1) {$\int(\cdot)\psi_{1}^{*}(r_x)\diff r_x$};
\node[draw, rectangle, below right=1.25*\dv and \dh/3 of cross_right.center, align=center, text width=2cm] (psi_2) {$\int (\cdot)\psi_{N}^{*}(r_x)\diff r_x$};
\node[align=center] () at ($(cross_right)+(\dv+0.6cm,0cm)$) {$\vdots$};
\draw[->] (cross_right.center) -- ($(psi_1.west)+(-\dh/3,0)$) -- (psi_1.west);
\draw[->] (cross_right.center) -- ($(psi_2.west)+(-\dh/3,0)$) -- (psi_2.west);

\node[left=\dv/2 of phi_1, align=center, inner sep=0pt] (x_1) {$x_1$};
\node[left=\dv/2 of phi_2, align=center, inner sep=0pt] (x_2) {$x_N$};
\draw[->] (x_1.east) -- (phi_1.west);
\draw[->] (x_2.east) -- (phi_2.west);

\node[draw, circle, right=\dv/2 of psi_1, align=center, inner sep=0pt] (+_right_1) {+};
\node[draw, circle, right=\dv/2 of psi_2, align=center, inner sep=0pt] (+_right_2) {+};
\node[right=\dv/2 of +_right_1, align=center, inner sep=0pt] (y_1) {$y_1$};
\node[right=\dv/2 of +_right_2, align=center, inner sep=0pt] (y_2) {$y_N$};
\node[below=\dv/2 of +_right_1, align=center, inner sep=0pt] (z_1) {$z_1$};
\node[above=\dv/2 of +_right_2, align=center, inner sep=0pt] (z_2) {$z_N$};
\draw[->] (psi_1.east) -- (+_right_1.west);
\draw[->] (psi_2.east) -- (+_right_2.west);
\draw[->] (+_right_1.east) -- (y_1.west);
\draw[->] (+_right_2.east) -- (y_2.west);
\draw[->] (z_1.north) -- (+_right_1.south);
\draw[->] (z_2.south) -- (+_right_2.north);


\end{tikzpicture}
    \caption{Considered \ac{WDM}-based \ac{NLoS} holographic \ac{MIMO} system model.}
    \label{fig:WDM System Model}
\end{figure}

Consider a \ac{WDM}-based holographic \ac{MIMO} system over an \ac{NLoS} channel as depicted in Fig.~\ref{fig:WDM System Model}, where a line source spanning the linear region $\setL_{\textrm{s}} \subset \Real^{2}$ with length $L_{\textrm{s}}$ transmits data to a line receiver spanning the linear region $\setL_{\textrm{r}} \subset \Real^{2}$ with length $L_{\textrm{r}}$. The two lines are parallel and oriented along the $x$-axis, with their centers aligned along the $z$-axis and separated by a distance $d$. Let $\s = [s_{x}, s_{z}]^{\tran} \in \setL_{\textrm{s}}$ and $\r = [r_{x}, r_{z}]^{\tran} \in \setL_{\textrm{r}}$ denote arbitrary points within the source and receiver regions, respectively. For simplicity, we set $s_z=0$ and $r_z=d$; consequently, the source and receiver span the linear regions $\setL_{\textrm{s}}=\{(s_x, 0): |s_x|\leq \frac{L_{\textrm{s}}}{2}\} $ and $\setL_{\textrm{r}}= \{(r_x, d): |r_x|\leq \frac{L_{\textrm{r}}}{2}\}$, respectively, with corresponding gains $G_{\textrm{s}}=\frac{2L_{\textrm{s}}}{\lambda}$ and $G_{\textrm{r}}=\frac{2L_{\textrm{r}}}{\lambda}$ \cite{balanis2016antenna}. The wavenumber is defined as $\kappa = k = \frac{2\pi}{\lambda}$, where $\lambda$ is the wavelength. Let $\{x_m\}_{m=1}^{N}$ and $\{y_n\}_{n=1}^{N}$ denote the transmit and receive complex data symbols with $N \leq \mathrm{min}(n_{\textrm{s}}, n_{\textrm{r}})$, and let $z_n \in \setN_{\Compl}(0, \chi^2)$ be the additive white Gaussian noise term with variance $\chi^2$. Following \cite{praj25journal}, at any symbol time, the system model in Fig.~\ref{fig:WDM System Model} can be described~by 
\begin{align}
\label{eqn: input-output relationship}
y_n=\sqrt{G_{\textrm{s}}G_{\textrm{r}}}\sum_{m=1}^{N} H_{n, m} x_m+ z_n,\quad n=1,\ldots,N,
\end{align}
where 
\begin{align}
\label{eqn: Channel H_nm mod}
H_{n, m}=\int_{-\frac{L_{\textrm{r}}}{2}}^{\frac{L_{\textrm{r}}}{2}}\int_{-\frac{L_{\textrm{s}}}{2}}^{\frac{L_{\textrm{s}}}{2}}\psi_n^*(r_x) h(\r, \s) \phi_m(s_x) \diff s_x\diff r_x
\end{align}
is the \ac{WDM}-based \ac{NLoS} channel corresponding to the $m$-th transmit and $n$-th receive Fourier basis functions and $ h(\r, \s)$ is the \ac{EM}-based \ac{NLoS} channel impulse response \cite{praj25journal}. The transmit Fourier basis is $\big\{\phi_m(s_x)\big\}_{m=1}^{N}$, with $m$-th basis function $\phi_m(s_x)=\frac{1}{\sqrt{L_{\textrm{s}}}}e^{j\frac{2\pi}{L_{\textrm{s}}}m s_x},\ |s_x|\leq\frac{L_{\textrm{s}}}{2}$. 
Similarly, the receive Fourier basis is $\big\{\psi_n(r_x)\big\}_{n=1}^{N}$, with $n$-th basis function $ \frac{1}{\sqrt{L_{\textrm{r}}}}e^{j\frac{2\pi}{L_{\textrm{r}}}n r_x},\  |r_x|\leq \frac{L_{\textrm{r}}}{2}$.

Since $h(\r,\s)$ is bandlimited within $\setD^{2}$ \cite{praj25journal}, which specifies the channel support in the continuous wavenumber domain, sampling this support as in \cite{pizzo2022fourier} yields its discretized counterpart formed by the sets $\setE_{\textrm{s}} =\big\{p_x \in \mathbb{Z}:-\kappa \leq \frac{2\pi}{L_{\textrm{s}}}p_x\leq \kappa\big\}$ and $\setE_{\textrm{r}} =\big\{q_x \in \mathbb{Z}:-k \leq \frac{2\pi}{L_{\textrm{r}}}q_x\leq k \big\}$ at the source and receiver, respectively. Now, define $ n_{\textrm{s}} = \card(\setE_{\textrm{s}}) =\big\lfloor\frac{2 L_{\textrm{s}}}{\lambda}\big\rfloor$ and $n_{\textrm{r}} = \card(\setE_{\textrm{r}}) =\big\lfloor\frac{2 L_{\textrm{r}}}{\lambda}\big\rfloor$. Following the Fourier plane-wave series expansion of a spatially stationary complex Gaussian random process from \cite{pizzo2022fourier}, we have
\begin{align}
\label{eqn: Discrete NLoS}
h(\r,\s)\approx\sum_{q_x\in\setE_{\textrm{r}}}\sum_{p_x \in\setE_{\textrm{s}}} H_{\textrm{a}}({q_x, p_x})a_{\textrm{r}}(q_x, \r)a_{\textrm{s}}(p_x, \s),
\end{align}
with $a_{\textrm{s}}(p_x,\s)=e^{-j\left(\frac{2\pi}{L_{\textrm{s}}}p_x s_{x}+\gamma(p_x)s_{z}\right)}$, $a_{\textrm{r}}(q_x,\r)=e^{j\left(\frac{2\pi}{L_{\textrm{r}}}q_xr_{x}+\gamma(q_x)r_{z}\right)}$, $\gamma(p_x)=\sqrt{\kappa^{2}-\big(\frac{2\pi p_x}{L_{\textrm{s}}}\big)^2}$, and $\gamma(q_x)=\sqrt{k^{2}-\big(\frac{2\pi q_x}{L_{\textrm{r}}}\big)^2}$, and where $H_{\textrm{a}}(q_x, p_x)\sim \setN_{\mathbb{C}}\big(0,\sigma^2(q_x, p_x)\big)$ acts as a coupling coefficient characterizing the interaction between the transmit and receive plane waves.

We now substitute the Fourier series expansion of $h(\r, \s)$ in \eqref{eqn: Discrete NLoS} into \eqref{eqn: Channel H_nm mod}. Since $e^{j\frac{2\pi}{L_{\textrm{r}}}\gamma(q_x)r_z} H_{\textrm{a}}(q_x, p_x)e^{-j\frac{2\pi}{L_{\textrm{s}}}\gamma(p_x)s_z}$ is statistically equivalent to $H_{\textrm{a}}(q_x, p_x)$, we can express \eqref{eqn: Channel H_nm mod} as \cite{praj25journal}
\begin{align}
H_{n, m}=\sqrt{L_{\textrm{r}} L_{\textrm{s}}}H_{\textrm{a}}({n, m}), 
\label{eqn: Channel H_nm discrete without sinc}
\end{align}
with $n\in\setE_{\textrm{r}}$ and $m\in\setE_{\textrm{s}}$. Furthermore, \eqref{eqn: Channel H_nm discrete without sinc} can be written in matrix form as
\begin{align}  
\label{eqn:AngularChannelMatrixStochastic}
    \H = \sqrt{L_{\textrm{r}} L_{\textrm{s}}}\Xib^{\frac{1}{2}}\odot\W \in \Compl^{n_{\textrm{r}} \times n_{\textrm{s}}},
\end{align}
where $\W \in \mathbb{C}^{n_{\textrm{r}} \times n_{\textrm{s}}}$ is a random matrix with \ac{i.i.d.} $\setN_{\mathbb{C}}(0,1)$ entries and $\Xib \in \Real_+^{n_{\textrm{r}} \times n_{\textrm{s}}}$ acts as the coupling matrix containing the variances $\left\{\sigma^2(n, m) : n \in \setE_{\textrm{r}}, m \in \setE_{\textrm{s}} \right\}$, with \cite{praj25journal}
\begin{align}
    \label{eqn: sigma at receiver2 R}
    \sigma^2(n, m) = \frac{1}{(2\pi)^2}\int\int_{{\setT}_{\textrm{r}}(n)\Cross{\setT}_{\textrm{s}}(m)}\hspace{-4mm}A^2(\theta_{\textrm{r}}, \theta_{\textrm{s}})\diff\theta_{\textrm{r}}\diff\theta_{\textrm{s}},
\end{align}
where $\setT_{\textrm{r}}(n)=\big\{\arccos\big({\frac{\lambda}{L_{\textrm{r}}}(n+1)}\big), \arccos\big({\frac{\lambda}{L_{\textrm{r}}}n}\big)\big\}$ and $\setT_{\textrm{s}}(m)=\big\{\arccos\big({\frac{\lambda}{L_{\textrm{s}}}(m+1)}\big), \arccos\big({\frac{\lambda}{L_{\textrm{s}}}m}\big)\big\}$ denote the receive and transmit angular intervals, respectively. Here, the non-negative function $A^2(\theta_{\textrm{r}},\theta_{\textrm{s}})$ characterizes the scattering environment and is referred to as \ac{PSF}.

\section{Analysis of Non-Separable Scattering}
\label{sec: Analysis of Non-Separable Scattering}

In this section, we model the \ac{PSF} through its scaled version $\tilde{A}^2(\theta_{\textrm{r}}, \theta_{\textrm{s}} )=\frac{A^2(\theta_{\textrm{r}},\theta_{\textrm{s}})}{(2\pi)^2}$ and analyze the marginal densities, the \ac{KL} divergences from the uniform distribution, and their monotonicity. Based on this, we characterize the spectral efficiency and establish its monotonicity. Finally, we derive the spatial autocorrelation of the \ac{WDM}-based \ac{NLoS} channel.

\subsection{Modeling of the PSF}
\label{sec: Modeling of the PSF}

Motivated by the separable scattering model considered in \cite{praj25journal}, we model, for the first time, the joint \ac{PSF} that characterizes non-separable scattering using a bivariate \ac{vMF} distribution. Specifically, we express $\tilde{A}^2(\theta_{\textrm{r}}, \theta_{\textrm{s}})$ as\footnote{We consider only forward-traveling waves \cite{praj25journal}, i.e., $\theta_{\textrm{s}},\theta_{\textrm{r}}\in[0, \pi)$. Since \eqref{eqn: modeling PSF} is highly concentrated around the mean, extending the support from $[0, \pi)$ to $[-\pi,\pi)$ simplifies the subsequent analysis while having a negligible impact on the numerical results.}
\begin{align}
\label{eqn: modeling PSF}
    \tilde{A}^2(\theta_{\textrm{r}}, \theta_{\textrm{s}} )=p_{\Theta_{\textrm{r}}\Theta_{\textrm{s}}}(\theta_{\textrm{r}}, \theta_{\textrm{s}}), \quad \theta_{\textrm{s}},\theta_{\textrm{r}}\in[-\pi, \pi),
\end{align} 
where
\begin{align}
    \label{eqn:Bivariavte 2DdvMF}
    \hspace{-2mm} \begin{split}
          p_{\Theta_{\textrm{r}}\Theta_{\textrm{s}}}(\theta_{\textrm{r}}, \theta_{\textrm{s}})& = K_{\mu} \exp\big(\alpha_{\textrm{r}}\cos{(\theta_{\textrm{r}} - \bar{\theta}_{\textrm{r}}})+\alpha_{\textrm{s}}\cos{(\theta_{\textrm{s}} - \bar{\theta}_{\textrm{s}}})\\
          & \phantom{=} \ +\mu\sin{(\theta_{\textrm{r}} - \bar{\theta}_{\textrm{r}}})\sin{(\theta_{\textrm{s}} - \bar{\theta}_{\textrm{s}}})\big)
    \end{split}
\end{align}
is the \ac{PDF} of the bivariate \ac{vMF} distribution~\cite{demchuk2002miscellanea}, where $K_{\mu}=\big(4\pi^{2}\sum_{l=0}^{\infty}\binom{2l}{l} \big(\frac{\mu^{2}}{4\alpha_{\textrm{r}}\alpha_{\textrm{s}}}\big)^{l} I_{l}(\alpha_{\textrm{r}}) I_{l}(\alpha_{\textrm{s}})\big)^{-1}$ denotes the normalization constant and $I_l(\cdot)$ is the modified Bessel function of the first kind and order $l$. Here, $\alpha_{\textrm{s}}, \alpha_{\textrm{r}}\geq 0$ and $\bar{\theta}_{\textrm{s}}, \bar{\theta}_{\textrm{r}}\in[0,\pi)$ are the concentration parameters and mean scattering angles, respectively, at the source and receiver, whereas $\mu$ is the \textit{non-separability parameter} controlling the degree of non-separability between the source and receiver scattering. The \ac{PDF} in \eqref{eqn:Bivariavte 2DdvMF} is unimodal for $\mu^2<\mu_*^2=\alpha_{\textrm{s}}\alpha_{\textrm{r}}$ \cite{mardia2007bivariate}, corresponding to a single physical scattering cluster, while it becomes bimodal for $\mu^2\geq\mu_*^2$, representing two distinct clusters. Throughout this paper, we restrict our attention to the unimodal regime, i.e., $\mu\in[0,\mu_*)$. 

In the special case of $\mu=0$, the proposed model recovers the separable scattering model in \cite{praj25journal}. In this case, the \ac{PDF} in \eqref{eqn:Bivariavte 2DdvMF} factorizes into the product of its marginal densities as $p_{\Theta_{\textrm{r}}\Theta_{\textrm{s}}}(\theta_{\textrm{r}}, \theta_{\textrm{s}})=p_{\Theta_\textrm{r}}(\theta_{\textrm{r}})p_{\Theta_\textrm{s}}(\theta_{\textrm{s}})$, where $p_{\Theta_\textrm{s}}(\theta_{\textrm{s}})$ and $p_{\Theta_\textrm{r}}(\theta_{\textrm{r}})$ denote the \acp{PDF} of the univariate \ac{vMF} distributions modeling the separable scattering at the source and receiver, respectively. Consequently, substituting this factorization into \eqref{eqn: sigma at receiver2 R}--\eqref{eqn: modeling PSF} yields $\sigma^2(n, m)=\sigma^{2}_{\textrm{r}}(n)\sigma^{2}_{\textrm{s}}(m)$, where $\sigma^{2}_{\textrm{s}}(m)$ and $\sigma^{2}_{\textrm{r}}(n)$ are the variances at the source and receiver, respectively.

\subsection{Marginal Density, KL Divergence, and Monotonicity}
\label{sec:Marginal Density, KL Divergence and Monotonicity}

In this section, we focus on the receiver as the corresponding results for the source follow from analogous derivations and lead to the same conclusions. The marginal density of the scattering at the receiver is given by \cite{demchuk2002miscellanea}
\begin{align}
    \label{eqn: marginal at receiver}
    f_{\Theta_{\textrm{r}}}(\theta_{\textrm{r}})=2\pi K_{\mu} I_0\big(a(\theta_{\textrm{r}})\big)\exp{\big(\alpha_{\textrm{r}}\cos{(\theta_{\textrm{r}} \! - \! \bar{\theta}_{\textrm{r}}})\big)},
\end{align}
with $a(\theta_{\textrm{r}})=\sqrt{\alpha_{\textrm{s}}^2+\mu^2\sin^2{(\theta_{\textrm{r}} - \bar{\theta}_{\textrm{r}}})}$. We note that, in the special case of separable scattering, we have $f_{\Theta_{\textrm{r}}}(\theta_{\textrm{r}})=p_{\Theta_\textrm{r}}(\theta_{\textrm{r}})$. The \ac{KL} divergence between the distribution associated with the marginal density in \eqref{eqn: marginal at receiver} and the  uniform  distribution with \ac{PDF} $u_{\Theta_{\textrm{r}}}(\theta_{\textrm{r}})=\frac{1}{2\pi}$, $\theta_{\textrm{r}}\in[-\pi,\pi)$ is given by
\begin{align}
    \label{eqn: KL}
    \begin{split}
    D_{\textrm{KL}}(f_{\Theta_{\textrm{r}}}\|u_{\Theta_{\textrm{r}}})
    & =\int_{-\pi}^{\pi}
    f_{\Theta_{\textrm{r}}}(\theta_{\textrm{r}})
    \log_2\bigg(
    \frac{f_{\Theta_{\textrm{r}}}(\theta_{\textrm{r}})}
    {u_{\Theta_{\textrm{r}}}(\theta_{\textrm{r}})}
    \bigg)
    d\theta_{\textrm{r}} \\
    &=\log_2{(2\pi K_{\mu}^{\frac{1}{2}})^2}+\alpha_{\textrm{r}}\Exp_{f_{\Theta_{\textrm{r}}}}[\cos{\theta_{\textrm{r}}}]\\
    & \phantom{=} \ +\Exp_{f_{\Theta_{\textrm{r}}}}\big[\log_2{I_0\big(a(\theta_{\textrm{r}})\big)}\big]. 
    \end{split}   
\end{align}

In the large-concentration and unimodal regime, i.e., for $\alpha_{\textrm{s}},\alpha_{\textrm{r}}\gg 1$ and $\mu^2<\mu_*^2=\alpha_{\textrm{s}}\alpha_{\textrm{r}}$, \eqref{eqn:Bivariavte 2DdvMF} can be approximated by the \ac{PDF} of the bivariate Gaussian distribution with mean and covariance matrix given by \cite{demchuk2002miscellanea, savaux2020computation}
\begin{align}
    \label{eqn: mean and covariance}
    \thetab=
    \begin{bmatrix}
       \bar{\theta}_{\textrm{r}} \\
       \bar{\theta}_{\textrm{s}}
    \end{bmatrix},\quad
    \Sigmab=
    \begin{bmatrix}
       \frac{\alpha_{\textrm{s}}}{\alpha_{\textrm{s}}\alpha_{\textrm{r}}-\mu^2} & \frac{\mu}{\alpha_{\textrm{s}}\alpha_{\textrm{r}}-\mu^2}\\
       \frac{\mu}{\alpha_{\textrm{s}}\alpha_{\textrm{r}}-\mu^2}& \frac{\alpha_{\textrm{r}}}{\alpha_{\textrm{r}}\alpha_{\textrm{r}}-\mu^2}
    \end{bmatrix},
\end{align}
respectively. Consequently, the marginal density in \eqref{eqn: marginal at receiver} can also be approximated by the \ac{PDF} of the Gaussian distribution, i.e., ${\Theta_{\textrm{r}}}\overset{\textrm{approx.}}{\sim} \setN\big(\bar{\theta}_{\textrm{r}} ,\frac{\alpha_{\textrm{s}}}{\alpha_{\textrm{s}}\alpha_{\textrm{r}}-\mu^2}\big)$. In this setting, the marginal concentration parameter of $\Theta_{\textrm{r}}$ is given by $\alpha_{\textrm{r}}^{\textrm{eff}}=\alpha_{\textrm{r}}-\frac{\mu^2}{\alpha_{\textrm{s}}}$, which is strictly decreasing for $\mu\in[0,\mu_*)$, indicating that the marginal density becomes progressively flatter as $\mu$ increases. Subsequently, using the Gaussian approximation of the marginal density, \eqref{eqn: KL} can be approximated~as~$D_{\textrm{KL}}(f_{\Theta_{\textrm{r}}}\|u_{\Theta_{\textrm{r}}})\approx\frac{1}{2}\log_2{\alpha_{\textrm{r}}^{\textrm{eff}}}$, which is also strictly decreasing for $\mu\in[0,\mu_*)$, indicating that $f_{\Theta_{\textrm{r}}}$ approaches the uniform density as $\mu$ increases. In Section~\ref{sec: Numerical Results}, we show that the large-concentration approximation closely matches the exact marginal densities and their corresponding \ac{KL} divergences for practical values of the concentration parameters and mean scattering angles.

\subsection{Ergodic Spectral Efficiency}
\label{sec: Characterization of Spectral Efficiency}

Assuming equal power allocation across the transmit streams, the ergodic spectral efficiency (in bits/s/Hz) is given by
\begin{align}
    \label{eqn: Spectral Efficiency}  
    C_{\H}(\mu)
    = \Exp \bigg[
      \log_{2} \det\left(\I_{n_{\textrm{r}}}+\frac{\textrm{snr}}{n_{\textrm{s}}}
      \H\H^{\herm}\right) \bigg],
\end{align}
where $\H$ is given in \eqref{eqn:AngularChannelMatrixStochastic} and $\textrm{snr}=\frac{P}{\chi^2}$ denotes the transmit \ac{SNR}, with $P$ representing the total transmit power. To simplify the monotonicity proofs, we normalize $\H$ by omitting the constant term $\sqrt{L_{\textrm{r}}L_{\textrm{s}}}$ in Sections~\ref{sec: Characterization of Spectral Efficiency} and \ref{sec: Monotonicity of Spectral Efficiency}. Consequently, we have $\Exp[\tr(\H\H^\herm)]=\sum_{n,m=1}^{n_{\textrm{r}}, n_{\textrm{s}}}\big[\Xib\big]_{n,m}=1$, i.e., the total average channel power is constant, since $\Xib$ contains the variances obtained from the \ac{PSF}. Under this assumption, we next derive an upper bound and a low-\ac{SNR} approximation of the ergodic spectral efficiency.

\smallskip

\textbf{\textit{Upper bound.}} Using Jensen's inequality for concave functions, we have $C_{\H}(\mu) \leq C_{\H}^{\textrm{UB}}(\mu)$, with
\begin{align}
    \label{eqn: upper bound Spectral Efficiency}  
    C_{\H}^{\textrm{UB}}(\mu)=
      \log_{2} \det\bigg(\I_{n_{\textrm{r}}}+\frac{\textrm{snr}}{n_{\textrm{s}}}
      \Exp \big[\H\H^{\herm}\big]\bigg).
\end{align}
Define $\v(\mu) = \big[v_1(\mu),\ldots,v_{n_{\textrm{r}}}(\mu)\big]^\tran \in \Real_{+}^{n_{\textrm{r}}}$, where $v_n(\mu)=\sum_{m=1}^{n_\textrm{s}}[\Xib]_{n,m}$ denotes the sum over the $n$-th row of $\Xib$. Exploiting $\Exp \big[\H\H^{\herm}\big] = \diag \big(\v(\mu)\big)$ allows to simplify \eqref{eqn: upper bound Spectral Efficiency}~as
\begin{align}
    \label{eqn: upper bound Spectral Efficiency1}  
    C_{\H}^{\textrm{UB}}(\mu)=
      \sum_{n=1}^{n_{\textrm{r}}}\log_{2} \bigg(1+\frac{\textrm{snr}}{n_{\textrm{s}}}
      v_n(\mu)\bigg).
\end{align}

\smallskip

\textbf{\textit{Low-\ac{SNR} approximation.}}
Using the $2$nd-order Taylor series expansion of \eqref{eqn: Spectral Efficiency}, we have $C_{\H}(\mu) \approx C_{\H}^{\textrm{low}}(\mu)$, with
\begin{align}
    \label{eqn: low snr Spectral Efficiency}  
    C_{\H}^{\textrm{low}}(\mu) = \frac{\textrm{snr}}{n_{\textrm{s}}\ln{2}}\Exp\big[\tr(\H\H^\herm)\big]\!-\!\frac{\textrm{snr}^2}{2n_{\textrm{s}}^2\ln{2}}\Exp\big[\tr\big((\H\H^\herm)^2\big)\big].
\end{align}
Define $\w(\mu) = \big[w_1(\mu),\ldots,w_{n_{\textrm{s}}}(\mu)\big]^\tran \in \Real_{+}^{n_{\textrm{s}}}$, where $w_m(\mu)=\sum_{n=1}^{n_\textrm{r}}[\Xib]_{n,m}$ denotes the sum over the $m$-th column of $\Xib$. Exploiting $\Exp\big[\tr(\H\H^\herm)\big]=1$ and $\Exp\big[\tr\big((\H\H^\herm)^2\big)\big]=\big\|\v(\mu)\big\|_{2}^2+\big\|\w(\mu)\big\|_{2}^2$ allows us to simplify \eqref{eqn: low snr Spectral Efficiency} as 
\begin{align}
    \label{eqn: low snr Spectral Efficiency1}  
    C_{\H}^{\textrm{low}}(\mu) = \frac{\textrm{snr}}{n_{\textrm{s}}\ln{2}}-\frac{\textrm{snr}^2}{2 n_{\textrm{s}}^2\ln{2}}\big(\big\|\v(\mu)\big\|_{2}^2+\big\|\w(\mu)\big\|_{2}^2\big).
\end{align}

\smallskip

Next, we investigate the monotonicity of the above upper bound and low-\ac{SNR} approximation with respect to $\mu$.\footnote{A high-\ac{SNR} approximation of the ergodic spectral efficiency requires a closed-form expression for $\Exp\big[\log_{2}\det(\H\H^\herm)\big]$, making it challenging to establish monotonicity with respect to $\mu$. We leave this analysis for future work; however, our numerical results show that the ergodic spectral efficiency increases monotonically with $\mu$ throughout the considered \ac{SNR} range.}

\subsection{Monotonicity of the Spectral Efficiency}
\label{sec: Monotonicity of Spectral Efficiency}

\textbf{\textit{Monotonicity of $C_{\H}^{\textnormal{UB}}(\mu)$ in \eqref{eqn: upper bound Spectral Efficiency1}.}} Since $C_{\H}^{\textrm{UB}}(\mu)$ depends on $\mu$ through $\v(\mu)$, we utilize \eqref{eqn: sigma at receiver2 R}--\eqref{eqn: modeling PSF}, to express $v_n(\mu)$ as
\begin{align}
         v_n(\mu)=\sum_{m=1}^{n_\textrm{s}} \big[\Xib\big]_{n ,m}&=\int_{\tau_{\textrm{r}}(n)}\sum_{m=1}^{n_\textrm{s}}\int_{\tau_{\textrm{s}}(m)} p_{\Theta_{\textrm{r}}\Theta_{\textrm{s}}}(\theta_{\textrm{r}}, \theta_{\textrm{s}})\diff\theta_{\textrm{s}}\diff\theta_{\textrm{r}},\nonumber\\
         \begin{split}
          \label{eqn: marginal row sum}
         &=\int_{\tau_{\textrm{r}}(n)}f_{\Theta_{\textrm{r}}}(\theta_{\textrm{r}})\diff\theta_{\textrm{r}}.
    \end{split}
\end{align}
Thus, $\v(\mu)$ is determined solely by the marginal density at the receiver. Establishing the monotonicity of $C_{\H}^{\textrm{UB}}(\mu)$ requires determining the majorization ordering of $\v(\mu)$ as $\mu$ varies. This, in turn, requires a uniform partition, whereas $\setT_{\textrm{r}}$ is generally nonuniform. We therefore introduce the set of uniformly spaced bins $\setU_{\textrm{r}}(n) = \big\{\big({\frac{\lambda}{L_{\textrm{r}}}n}\big), \big({\frac{\lambda}{L_{\textrm{r}}}(n+1)}\big) \big\}$ and apply the change of variable $\theta_{\textrm{r}}=\arccos{u_{\textrm{r}}}$ to rewrite \eqref{eqn: marginal row sum} as
\begin{align}
    \label{eqn: marginal row sum1}      v_n(\mu)=\int_{\setU_{\textrm{r}}(n)}\frac{f_{\Theta_{\textrm{r}}}(\arccos{u_{\textrm{r}}})}{\sqrt{1-u_{\textrm{r}}^2}}\diff u_{\textrm{r}}.
\end{align}
Based on \eqref{eqn: marginal row sum1}, the following proposition establishes the majorization order of $\v(\mu)$; then, Lemma~\ref{prop:flatten} and Theorem~\ref{thm:monotonicity} characterize the monotonicity of $C_{\H}^{\textnormal{UB}}(\mu)$ with respect to $\mu$.

\begin{proposition}
\label{prop:flatten}
Assuming $\bar{\theta}_{\textnormal{r}}\notin\{0,\pi\}$, for $0\leq\mu<\mu'<\mu_*$, we have $\v(\mu)\succ\v(\mu')$, i.e., $\v(\mu)$ becomes progressively more uniform as $\mu$ increases. The same holds for $\w(\mu)$.
\end{proposition}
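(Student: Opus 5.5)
The plan is to reduce the majorization claim to a single-crossing property of the transformed marginal density, and then invoke a standard majorization lemma for vectors of bin probabilities. Write $g_\mu(u)=f_{\Theta_{\textrm{r}}}(\arccos u)/\sqrt{1-u^2}$ on the $u$-domain. By \eqref{eqn: marginal row sum1}, $v_n(\mu)=\int_{\setU_{\textrm{r}}(n)} g_\mu(u)\,\diff u$ over equal-width bins, and $\sum_n v_n(\mu)$ is the same for every $\mu$ (total probability on the support, per the normalization $\sum_{n,m}[\Xib]_{n,m}=1$). The tool is the following fact. If $g_\mu/g_{\mu'}$ is monotone in $T(u)$ for a suitable unimodal ordering $T$, meaning $g_{\mu'}-g_\mu$ changes sign from $-$ to $+$ exactly once as one moves away from the mode, then the partial sums of the decreasingly ordered bin masses satisfy $\sum_{k\le K} v_{[k]}(\mu)\ge\sum_{k\le K} v_{[k]}(\mu')$ for every $K$. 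This is the usual ``less peaked density majorizes less'' argument of \cite{marshall1979inequalities}, obtained by comparing the top-$K$ bins of $\mu$ against any $K$ bins of $\mu'$.

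\textbf{Step 1 (ratio structure).} From \eqref{eqn: marginal at receiver} we have
\begin{align*}
\frac{f_{\Theta_{\textrm{r}}}^{\mu'}(\theta)}{f_{\Theta_{\textrm{r}}}^{\mu}(\theta)}=\frac{K_{\mu'}}{K_{\mu}}\cdot\frac{I_0\big(\sqrt{\alpha_{\textrm{s}}^2+\mu'^2 t}\big)}{I_0\big(\sqrt{\alpha_{\textrm{s}}^2+\mu^2 t}\big)},\qquad t=\sin^2(\theta-\bar\theta_{\textrm{r}}).
\end{align*}
The Jacobian $1/\sqrt{1-u^2}$ cancels in this ratio, so the same ratio holds for $g_{\mu'}/g_\mu$. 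I will show that $t\mapsto I_0(\sqrt{\alpha_{\textrm{s}}^2+\mu'^2t})/I_0(\sqrt{\alpha_{\textrm{s}}^2+\mu^2t})$ is strictly increasing. Setting $\phi(x)=\ln I_0(\sqrt{x})$, the derivative of the log-ratio is $\mu'^2\phi'(\alpha_{\textrm{s}}^2+\mu'^2t)-\mu^2\phi'(\alpha_{\textrm{s}}^2+\mu^2t)$, where $\phi'(x)=I_1(\sqrt x)/(2\sqrt x I_0(\sqrt x))$. It then suffices that $x\phi'(x)=\sqrt{x}\,I_1(\sqrt x)/(2I_0(\sqrt x))$ is increasing and positive. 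This follows since $y I_1(y)/I_0(y)$ is increasing, a known Bessel ratio monotonicity. Indeed, $\mu'^2\phi'(a')-\mu^2\phi'(a)$ with $a'\ge a$ and $\mu'^2/a'$ versus $\mu^2/a$ needs care, so I would instead directly differentiate in $\mu$: the $\mu$-derivative of $\ln I_0(\sqrt{\alpha_{\textrm{s}}^2+\mu^2t})$ is $\mu t\, I_1(a)/(a I_0(a))$. Its $t$-derivative is positive because $I_1(a)/(aI_0(a))$ decreases in $a$ only mildly. Verifying $\partial_t[t\,r(\sqrt{\alpha_{\textrm{s}}^2+\mu^2 t})]>0$ with $r(a)=I_1(a)/(aI_0(a))$ reduces to $r(a)+\frac{\mu^2 t}{2a}r'(a)>0$, which holds via Tur\'an/Amos-type bounds on $I_1/I_0$. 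Hence the ratio is a strictly increasing function of $t$. Since $K_{\mu'}/K_\mu<1$ by mass conservation, $g_{\mu'}-g_\mu$ is negative where $t$ is small (near $\bar\theta_{\textrm{r}}$) and positive where $t$ is large.

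\textbf{Step 2 (single crossing in $u$ and unimodality).} The $\exp(\alpha_{\textrm{r}}\cos(\theta-\bar\theta_{\textrm{r}}))$ factor makes $f$ unimodal in $\theta$ in the regime $\mu<\mu_*$. Using the hypothesis $\bar\theta_{\textrm{r}}\notin\{0,\pi\}$, the mode lies in the interior of $(0,\pi)$, so $\arccos$ maps it to an interior point $\bar u\in(-1,1)$. On the restricted support, $t$ increases monotonically as $|u-\bar u|$ grows on each side, because the footnote's concentration assumption places essentially all mass within $|\theta-\bar\theta_{\textrm{r}}|<\pi/2$. The sets $\{g_{\mu'}<g_\mu\}$ are therefore intervals around $\bar u$. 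Combining this with unimodality of $g_\mu$, the top-$K$ bins of $\mu$ form a contiguous block around the mode, and on such blocks $\mu$ carries more mass than $\mu'$ does on any $K$ bins. This yields $\v(\mu)\succ\v(\mu')$. For $\w(\mu)$ the argument is symmetric, swapping $\textrm{r}\leftrightarrow\textrm{s}$ via the source marginal.

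\textbf{Main obstacle.} The delicate step is the final comparison of sorted partial sums. Single crossing alone only gives that $\mu$ has more mass on the central contiguous block; it does not directly give that the $K$ largest bins of $\mu'$ are dominated. The Jacobian also breaks symmetry about $\bar u$, so the sorted order need not match the distance from $\bar u$. I would handle this by noting that $g_\mu/g_{\mu'}$ increasing in $-t$ means $g_\mu$ is a ``more peaked'' reweighting of $g_{\mu'}$ by a unimodal factor. I would then prove the ordering by a rearrangement argument: choose the $K$ bins with the largest $\mu'$ mass and show that moving them toward the mode of $g_\mu$ only increases the $\mu$ mass. If this fails in general, I would fall back to the Gaussian large-concentration regime of \eqref{eqn: mean and covariance}, where the claim becomes a variance comparison with the Jacobian nearly constant over the effective support.
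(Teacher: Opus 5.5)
Your Step~1 is sound, and simpler than you make it. Since $\mu^2 t=a^2-\alpha_{\textrm{s}}^2<a^2$, the condition $r(a)+\frac{\mu^2 t}{2a}r'(a)>0$ follows from $r+\frac{a}{2}r'=\frac{1}{2a}\frac{\diff}{\diff a}\big(aI_1(a)/I_0(a)\big)>0$, i.e., from the monotonicity of $aI_1(a)/I_0(a)$ alone.

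The genuine gap is the step you flag yourself, and it is the whole content of the proposition: going from the pointwise ratio ordering to $S_K(\mu)\ge S_K(\mu')$ for the fixed equal-width $u$-bins. The ``standard fact'' you invoke does not exist in this form, since here the bins are fixed and the ratio is monotone in $t$ rather than in the level of $g_{\mu'}$.

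\begin{itemize}
\item Single crossing only determines the sign of $\int_B(g_\mu-g_{\mu'})$ for intervals $B$ nested with the crossing set $C=\{t<t^*\}$.
\item The top-$K$ block $B'$ of $\mu'$ is dictated by the level sets of $g_{\mu'}$ in $u$. The Jacobian $1/\sqrt{1-u^2}$ skews these relative to $C$, and the $u$-mode even drifts with $\mu$.
\item So $B'$ and $C$ need not be nested, and nothing forces $\int_{B'}g_\mu\ge\int_{B'}g_{\mu'}$.
\end{itemize}

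Your rearrangement remedy compares $\mu$-masses of different bin sets. That yields only the trivial $S_K(\mu)\ge\int_{B'}g_\mu$; the needed cross-comparison with $\int_{B'}g_{\mu'}$ is never made. Step~2 is also itself approximate, because $t=\sin^2(\theta-\bar\theta_{\textrm{r}})$ falls back toward $0$ far from the mode. With forward-only support, $\sum_n v_n(\mu)$ is not exactly $\mu$-invariant either. So this route cannot give an exact statement anyway.

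Your one-line fallback is in fact the paper's entire proof, but the mechanism is what matters.

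\begin{enumerate}
\item The paper replaces the marginal by its large-concentration Gaussian and applies the first-order delta method to $\cos\Theta_{\textrm{r}}$. The bin integrand in \eqref{eqn: marginal row sum1} then becomes $\setN(u_0,\rho^2)$, with a $\mu$-independent center $u_0=\cos\bar\theta_{\textrm{r}}$ and $\rho^2=\sin^2\bar\theta_{\textrm{r}}\,\alpha_{\textrm{s}}/(\alpha_{\textrm{s}}\alpha_{\textrm{r}}-\mu^2)$.
\item This is where $\bar\theta_{\textrm{r}}\notin\{0,\pi\}$ is used: it makes $\rho$ strictly increasing in $\mu$. It is not needed for the mode to be interior.
\item Symmetry and unimodality about the fixed $u_0$ make the top-$\ell$ bins a contiguous block from a $\mu$-independent family $\setJ_\ell$ of blocks $[a_j,b_j]\ni u_0$.
\item Each such block has mass $\Gamma_j(\rho)=\Phi\big(\frac{b_j-u_0}{\rho}\big)+\Phi\big(\frac{u_0-a_j}{\rho}\big)-1$, which is strictly decreasing in $\rho$. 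Hence $S_\ell(\mu)=\max_{j\in\setJ_\ell}\Gamma_j(\rho(\mu))$ strictly decreases in $\mu$.
\end{enumerate}

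The property doing the work is that every interval containing a common center of symmetry loses mass. This is stronger than single crossing. The Jacobian destroys it in your exact setting, and the delta method restores it. Without this fixed-center, fixed-family argument, your proposal is not yet a proof.
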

\begin{IEEEproof}
See Appendix~\ref{sec: Majorization Order}.
\end{IEEEproof}

\begin{lemma} 
\label{Lem: lemma1}
Since $\log_{2} \big(1+\frac{\textnormal{snr}}{n_{\textnormal{s}}}v_n(\mu)\big)$ is strictly concave in $v_n(\mu)$, $C_{\H}^{\textnormal{UB}}(\mu)$ is strictly Schur-concave in $\v(\mu)$. Hence, $\v(\mu)\succ\v(\mu')$ implies $C_{\H}^{\textnormal{UB}}(\mu')>C_{\H}^{\textnormal{UB}}(\mu)$ \cite{marshall1979inequalities}. 
\end{lemma}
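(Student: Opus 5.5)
The approach is to show that $C_{\H}^{\textrm{UB}}$, viewed as a function of $\v=[v_1,\ldots,v_{n_{\textrm{r}}}]^\tran\in\Real_+^{n_{\textrm{r}}}$, has the form $F(\v)=\sum_{n=1}^{n_{\textrm{r}}} g(v_n)$ with $g(x)=\log_2\big(1+\frac{\textrm{snr}}{n_{\textrm{s}}}x\big)$. By \eqref{eqn: upper bound Spectral Efficiency1}, $F$ is symmetric and separable. Differentiating gives $g''(x)=-\frac{(\textrm{snr}/n_{\textrm{s}})^2}{\ln 2\,(1+\frac{\textrm{snr}}{n_{\textrm{s}}}x)^2}<0$ for $x\ge 0$ and $\textrm{snr}>0$. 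Hence $g$ is strictly concave on $\Real_+$. The classical result in \cite{marshall1979inequalities} then applies: a sum $\sum_n g(v_n)$ of a strictly concave $g$ is strictly Schur-concave. This means $\a\succ\b$ with $\b$ not a permutation of $\a$ implies $F(\a)<F(\b)$.

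To make this self-contained, I would argue via the Hardy--Littlewood--P\'olya characterization. If $\v(\mu)\succ\v(\mu')$, then $\v(\mu')=\mathbf{D}\v(\mu)$ for some doubly stochastic $\mathbf{D}$, which by Birkhoff's theorem is a convex combination of permutation matrices. Jensen's inequality applied coordinatewise gives $g\big(\sum_m [\mathbf{D}]_{n,m}v_m(\mu)\big)\ge\sum_m[\mathbf{D}]_{n,m}g(v_m(\mu))$. Summing over $n$ and using that the columns of $\mathbf{D}$ sum to one yields $F(\v(\mu'))\ge F(\v(\mu))$. Equality in every coordinate would require each row of $\mathbf{D}$ to average only equal entries of $\v(\mu)$, which forces $\v(\mu')$ to be a permutation of $\v(\mu)$. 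Alternatively, one can use a chain of Robin Hood (T-)transforms and the strict concavity of $g$ along each two-coordinate transfer.

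The remaining step, which I expect to be the only delicate one, is excluding the equality case. The inequality is strict only if $\v(\mu')$ is not a rearrangement of $\v(\mu)$. Proposition~\ref{prop:flatten} is read as a strict flattening for $\mu<\mu'$. To justify this, I would note that $\sum_n v_n^2$, which is strictly Schur-convex, is preserved under permutations. It therefore suffices to know that $\|\v(\mu)\|_2$ strictly decreases, and this is what the appendix proof of Proposition~\ref{prop:flatten} is expected to deliver. With that, $\v(\mu)\succ\v(\mu')$ and $\v(\mu')\ne\v(\mu)$ up to permutation, so strict Schur-concavity gives $C_{\H}^{\textrm{UB}}(\mu')>C_{\H}^{\textrm{UB}}(\mu)$.
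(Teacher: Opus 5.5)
Your proposal is correct and is the paper's own argument: strict concavity of $x\mapsto\log_2\big(1+\frac{\textrm{snr}}{n_{\textrm{s}}}x\big)$ makes the separable sum strictly Schur-concave \cite{marshall1979inequalities}, and your Hardy--Littlewood--P\'olya/Jensen derivation only makes that citation self-contained. You are right that strictness needs $\v(\mu')$ not to be a rearrangement of $\v(\mu)$, but you need not route this through $\|\v(\mu)\|_2$ (and taking that from Lemma~\ref{lem:schurconvex} would be circular, since its strictness needs the same fact): the appendix proof of Proposition~\ref{prop:flatten} already gives $v_{[1]}(\mu)>v_{[1]}(\mu')$, which rules out a rearrangement directly.
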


\begin{theorem}
\label{thm:monotonicity}
Assuming $\bar{\theta}_{\textnormal{r}}\notin\{0,\pi\}$, $C_{\H}^{\textnormal{UB}}(\mu)$ is strictly increasing for $\mu\in[0,\mu_*)$.
\end{theorem}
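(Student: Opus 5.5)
The plan is to combine Proposition~\ref{prop:flatten} with Lemma~\ref{Lem: lemma1}. Fix arbitrary $0\leq\mu<\mu'<\mu_*$. Under the assumption $\bar{\theta}_{\textrm{r}}\notin\{0,\pi\}$, Proposition~\ref{prop:flatten} gives $\v(\mu)\succ\v(\mu')$. For this ordering to be meaningful as majorization, the two vectors must have equal total sums. This holds because $\sum_{n}v_n(\mu)=\sum_{n,m}[\Xib]_{n,m}=1$ for every $\mu$, by the normalization of the total average channel power stated in Section~\ref{sec: Characterization of Spectral Efficiency}. Lemma~\ref{Lem: lemma1} then yields $C_{\H}^{\textrm{UB}}(\mu')>C_{\H}^{\textrm{UB}}(\mu)$, and since $\mu<\mu'$ were arbitrary, $C_{\H}^{\textrm{UB}}$ is strictly increasing on $[0,\mu_*)$.

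The one point I will verify explicitly is strictness. Strict Schur-concavity of the separable function $\sum_n g(v_n)$ with $g(x)=\log_2(1+\frac{\textrm{snr}}{n_{\textrm{s}}}x)$ strictly concave gives a strict inequality only when $\v(\mu')$ is not a permutation of $\v(\mu)$. I will therefore read Proposition~\ref{prop:flatten} as a strict majorization, with the vectors differing beyond a permutation. I will justify this from its proof in Appendix~\ref{sec: Majorization Order}, where the marginal $f_{\Theta_{\textrm{r}}}$ in \eqref{eqn: marginal at receiver} changes nontrivially with $\mu$. That change is visible through $a(\theta_{\textrm{r}})=\sqrt{\alpha_{\textrm{s}}^2+\mu^2\sin^2(\theta_{\textrm{r}}-\bar{\theta}_{\textrm{r}})}$ together with $K_\mu$, and it alters the bin masses in \eqref{eqn: marginal row sum1}. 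The hypothesis $\bar{\theta}_{\textrm{r}}\notin\{0,\pi\}$ is exactly what prevents degeneracy: it keeps the mass shift from cancelling across the uniform bins $\setU_{\textrm{r}}(n)$.

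With strictness in hand, the remaining step is the standard Karamata/Schur argument. Write $\v(\mu')$ as $\v(\mu)$ transformed by a finite sequence of T-transforms (Robin Hood transfers) that are not all trivial. By strict concavity of $g$, each nontrivial transfer strictly increases $\sum_n g(v_n)$.

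I expect the main obstacle to be confirming strict, non-permutation majorization; the Schur-concavity step itself is routine.
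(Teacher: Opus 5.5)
Your proposal is correct and matches the paper's proof: for $\mu<\mu'$, Proposition~\ref{prop:flatten} gives $\v(\mu)\succ\v(\mu')$, and the strict Schur-concavity in Lemma~\ref{Lem: lemma1} then yields $C_{\H}^{\textrm{UB}}(\mu')>C_{\H}^{\textrm{UB}}(\mu)$; your equal-sum and strictness checks are points the paper leaves implicit. For strictness, the cleanest justification is the strict partial-sum inequality $S_\ell(\mu)>S_\ell(\mu')$ in Appendix~\ref{sec: Majorization Order}: already for $\ell=1$ the largest bin mass strictly drops, so $\v(\mu')$ cannot be a permutation of $\v(\mu)$. The role of $\bar{\theta}_{\textrm{r}}\notin\{0,\pi\}$ there is to make $\rho^2=\sin^2\bar{\theta}_{\textrm{r}}\,\frac{\alpha_{\textrm{s}}}{\alpha_{\textrm{s}}\alpha_{\textrm{r}}-\mu^2}$ strictly increasing in $\mu$, rather than to prevent cancellation across bins.
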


\begin{IEEEproof}
Combining Proposition~\ref{prop:flatten} and Lemma~\ref{Lem: lemma1}, we have that $\mu<\mu'$ implies $\v(\mu)\succ\v(\mu')$, leading to $C_{\H}^{\textnormal{UB}}(\mu')>C_{\H}^{\textnormal{UB}}(\mu)$.
\end{IEEEproof}

\smallskip

\textbf{\textit{Monotonicity of $C_{\H}^{\textnormal{low}}(\mu)$ in \eqref{eqn: low snr Spectral Efficiency1}.}}
Since $C_{\H}^{\textnormal{low}}(\mu)$ depends on $\mu$ through $\v(\mu)$ and $\w(\mu)$, its monotonicity with respect to $\mu$ is characterized by the following lemma and theorem.

\begin{lemma}
\label{lem:schurconvex}
Since $\|\v(\mu)\|_{2}^2=\sum_{n=1}^{n_{\textnormal{r}}} v_n^2(\mu)$ is Schur-convex in $\v(\mu)$, $\v(\mu)\succ \v(\mu')$ implies
$\|\v(\mu)\|_{2}^2>\|\v(\mu')\|_{2}^2$\cite{marshall1979inequalities}. The same holds for $\w(\mu)$.
\end{lemma}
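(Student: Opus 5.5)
The claim follows from two standard facts: $\|\cdot\|_2^2$ is strictly Schur-convex, and $\v(\mu)\succ\v(\mu')$ with $\v(\mu)\neq\v(\mu')$ is supplied by Proposition~\ref{prop:flatten}.

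First, I establish strict Schur-convexity of $g(\v)=\sum_{n} v_n^2$ using the Schur--Ostrowski criterion. The function $g$ is symmetric and differentiable, and
\begin{align}
(v_i-v_j)\Big(\frac{\partial g}{\partial v_i}-\frac{\partial g}{\partial v_j}\Big)=2(v_i-v_j)^2\geq 0,
\end{align}
with equality only when $v_i=v_j$. Equivalently, $g$ is a sum of a strictly convex function applied to each coordinate, so by \cite{marshall1979inequalities} it is strictly Schur-convex.

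For an elementary route to the strict inequality, I would use the Hardy--Littlewood--P\'olya characterization. Since $\v(\mu)\succ\v(\mu')$, there is a doubly stochastic matrix $\mathbf{P}$ with $\v(\mu')=\mathbf{P}\v(\mu)$. Convexity of $x\mapsto x^2$ together with Jensen's inequality, applied row-wise, gives
\begin{align}
\sum_n v_n(\mu')^2\leq\sum_{n,k}[\mathbf{P}]_{n,k}\,v_k(\mu)^2=\sum_k v_k(\mu)^2,
\end{align}
where the equality uses that the columns of $\mathbf{P}$ sum to one. Strict convexity makes this inequality strict unless each row of $\mathbf{P}$ averages only equal entries, which would force $\v(\mu')$ to be a permutation of $\v(\mu)$.

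The main obstacle is this strictness, because majorization alone allows $\v(\mu')$ to be a permutation of $\v(\mu)$, and then the norms are equal. I would cite the strict ordering from Proposition~\ref{prop:flatten}, i.e., $\v(\mu)$ is not a rearrangement of $\v(\mu')$. This follows from the strict flattening of the marginal density for $\bar{\theta}_{\textrm{r}}\notin\{0,\pi\}$, since $f_{\Theta_{\textrm{r}}}$ varies strictly with $\mu$ through $a(\theta_{\textrm{r}})$. The argument for $\w(\mu)$ is identical, using the source marginal and the analogous statement in Proposition~\ref{prop:flatten}.
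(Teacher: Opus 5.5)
Your proposal is correct and takes essentially the paper's route: the paper simply invokes Schur-convexity of $\|\cdot\|_2^2$ together with the majorization from Proposition~\ref{prop:flatten}, citing \cite{marshall1979inequalities}. Your extra care about strictness is warranted, since majorization alone permits rearrangements, but justify that $\v(\mu')$ is not a rearrangement of $\v(\mu)$ through the strict partial-sum inequalities $S_\ell(\mu)>S_\ell(\mu')$ in the proof of Proposition~\ref{prop:flatten} (rearrangements share all sorted partial sums), not through the vaguer claim that $f_{\Theta_{\textrm{r}}}$ varies strictly with $\mu$.
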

 
\begin{theorem}
\label{thm:lowsnr}
Assuming $\bar{\theta}_{\textnormal{r}}, \bar{\theta}_{\textnormal{s}}\notin\{0,\pi\}$, $C_{\H}^{\textnormal{low}}(\mu)$ is strictly increasing for $\mu\in[0,\mu_*)$.
\end{theorem}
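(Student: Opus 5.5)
The proof follows the pattern of Theorem~\ref{thm:monotonicity}, using the closed form \eqref{eqn: low snr Spectral Efficiency1}. The first term $\frac{\textrm{snr}}{n_{\textrm{s}}\ln 2}$ does not depend on $\mu$. This holds because the normalization $\Exp[\tr(\H\H^\herm)]=\sum_{n,m}[\Xib]_{n,m}=1$ is valid for every $\mu\in[0,\mu_*)$. So the sign of any change in $C_{\H}^{\textrm{low}}(\mu)$ is fixed by the second term, which carries the strictly negative coefficient $-\frac{\textrm{snr}^2}{2n_{\textrm{s}}^2\ln 2}$. It therefore suffices to show that $g(\mu)=\|\v(\mu)\|_2^2+\|\w(\mu)\|_2^2$ is strictly decreasing on $[0,\mu_*)$.

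Fix $0\le\mu<\mu'<\mu_*$. The assumption $\bar{\theta}_{\textrm{r}}\notin\{0,\pi\}$ lets me invoke Proposition~\ref{prop:flatten} to get $\v(\mu)\succ\v(\mu')$. By the analogous source-side statement of that proposition, which holds under $\bar{\theta}_{\textrm{s}}\notin\{0,\pi\}$, I also get $\w(\mu)\succ\w(\mu')$. This is why the theorem needs both mean angles away from $\{0,\pi\}$, whereas Theorem~\ref{thm:monotonicity} needed only the receiver one. Applying Lemma~\ref{lem:schurconvex} to each vector gives $\|\v(\mu)\|_2^2>\|\v(\mu')\|_2^2$ and $\|\w(\mu)\|_2^2>\|\w(\mu')\|_2^2$. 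Adding these gives $g(\mu)>g(\mu')$, and hence $C_{\H}^{\textrm{low}}(\mu')>C_{\H}^{\textrm{low}}(\mu)$.

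The main obstacle is making sure the Schur-convexity step really gives a strict inequality. The function $\sum_n v_n^2$ is strictly Schur-convex, because $x\mapsto x^2$ is strictly convex. Strict Schur-convexity yields a strict inequality only when $\v(\mu')$ is not a permutation of $\v(\mu)$. I will therefore check that the majorization in Proposition~\ref{prop:flatten} is strict in this sense. The natural route is to use the strict flattening of the marginal already shown there, namely the strictly decreasing effective concentration $\alpha_{\textrm{r}}^{\textrm{eff}}$, away from the degenerate mean angles; the same argument is needed for $\w$. The remaining facts are routine: the vectors have fixed total sum $1$, so majorization is well defined, and the coefficient of $g$ is negative.
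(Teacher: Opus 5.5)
Your proposal is correct and follows essentially the same route as the paper. The paper's proof is a one-line combination of Proposition~\ref{prop:flatten} (applied to both $\v$ and $\w$) and Lemma~\ref{lem:schurconvex}, together with the constant first term and the negative coefficient in \eqref{eqn: low snr Spectral Efficiency1}. The strictness concern you raise is already settled by the appendix: it proves the strict partial-sum inequalities $S_\ell(\mu)>S_\ell(\mu')$, which rule out $\v(\mu')$ being a permutation of $\v(\mu)$, so you do not need the detour through $\alpha_{\textrm{r}}^{\textrm{eff}}$.
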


\begin{IEEEproof}
Combining Proposition~\ref{prop:flatten} and Lemma~\ref{lem:schurconvex}, we have that $\mu<\mu'$ implies $\big\|\v(\mu)\big\|_{2}^{2} > \big\|\v(\mu')\big\|_{2}^{2}$ and $\big\|\w(\mu)\big\|_{2}^{2} > \big\|\w(\mu')\big\|_{2}^{2}$, leading to $C_{\H}^{\textnormal{low}}(\mu')>C_{\H}^{\textnormal{low}}(\mu)$.
\end{IEEEproof}

\subsection{Spatial Autocorrelation of \texorpdfstring{$H_{n, m}$}{}}
\label{sec: Autocorrelation of Channel H_{nm}}

Using \eqref{eqn: Channel H_nm discrete without sinc} and following \cite{praj25journal}, the spatial autocorrelation of the \ac{WDM}-based \ac{NLoS} channel is given by
\begin{align}
\label{eqn: Autocorrelation angular1}
    \hspace{-2mm} R_{mn, qp} \! = \! \mathbb{E}[H_{n, m} H^*_{p, q}] \! = \! L_{\textrm{r}} L_{\textrm{s}} \sigma^2(n, m)\delta[n \! - \! p]\delta[m \! - \! q],
\end{align}
which is represented in matrix form as
\begin{align}
\label{eqn: Autocorrelation Matrix different}
 \R= L_{\textrm{r}}L_{\textrm{s}}\diag(\Vec{(\Xib})) \in\mathbb{R}_{+}^{n_{\textrm{r}}n_{\textrm{s}}\times n_{\textrm{r}}n_{\textrm{s}}}.
\end{align}
Accordingly, the \ac{WDM}-based \ac{NLoS} channel matrix is given~by
\begin{align}
    \label{eqn: Complete channel matrix}
    \H\triangleq \operatorname{mat}_{n_{\textrm{r}}\times n_{\textrm{s}}}\big(\R^{\frac{1}{2}} \vec(\W)\big).
\end{align}

In the special case of $\mu=0$, which corresponds to separable scattering model discussed in Section~\ref {sec: Modeling of the PSF}, \eqref{eqn: Autocorrelation angular1} reduces to
\begin{align}
\label{eqn: Autocorrelation Matrix different separable}
 \R=  L_{\textrm{r}}L_{\textrm{s}}\diag({\sigmab}_{\textrm{r}})\otimes \diag({\sigmab}_{\textrm{s}}),
\end{align}
where $\sigmab_{\textrm{s}}\in\Real_{+}^{n_{\textrm{s}}}$ and $\sigmab_{\textrm{r}}\in\Real_{+}^{n_{\textrm{r}}}$ collect the variances  $\{\sigma_{\textrm{s}}^{2}(m):m\in\setE_{\textrm{s}}\}$ and $\{\sigma_{\textrm{r}}^{2}(n):n\in\setE_{\textrm{r}}\}$ at the source and receiver, respectively. Hence, \eqref{eqn: Autocorrelation Matrix different separable} recovers the Kronecker structure of the \ac{WDM}-based \ac{NLoS} channel described in \cite{praj25journal}.

\begin{figure}[t]
\centering
\begin{tikzpicture}
\begin{axis}[
	width=7.5cm, height=5cm,
	xmin=-1, xmax=1,
	ymin=0, ymax=1,
	zmin=0, zmax=3,
	view={-45}{30},
	xlabel={$x$},
	ylabel={$z$},
	zlabel={$f_{\Theta_{\textrm{r}}}(\theta_{\textrm{r}})$},
	x label style={at={(axis description cs:0.85,0)}, anchor=north},
	y label style={at={(axis description cs:0.15,0)}, anchor=north},
	label style={font=\footnotesize},
	xtick={-1,-0.5,...,1},
	ytick={0,0.25,...,1},
	ztick={0,1,...,3},
	ticklabel style={font=\footnotesize},
	legend style={at={(0.65,0.99)}, anchor=north west, font=\scriptsize, inner sep=1pt, fill opacity=0.75, draw opacity=1, text opacity=1},
	legend cell align=left,
	title style={font=\scriptsize, yshift=-2mm},
	grid=major
]

\addplot3[thick, red] 
table[x=Var1, y=Var2, z=Marginal_Exact_r, col sep=comma]
{Results/files_txt/model_0_marginalR_sigma_mu_20.txt};
\addlegendentry{$\mu=20$ (exact)}

\addplot3[thick, red, only marks,  mark=star, mark options=solid, mark repeat=10] 
table[x=Var1, y=Var2, z=Marginal_Approx_r, col sep=comma]
{Results/files_txt/model_0_marginalR_sigma_mu_20.txt};
\addlegendentry{$\mu=20$ (approximation)}

\addplot3[thick, blue] 
table [x=Var1, y=Var2, z=Marginal_Exact_r, col sep=comma]
{Results/files_txt/model_0_marginalR_sigma_mu_0.txt};
\addlegendentry{$\mu=0$ (exact)}

\addplot3[thick, blue, only marks,  mark=star, mark options=solid, mark repeat=10] 
table[x=Var1, y=Var2, z=Marginal_Approx_r, col sep=comma]
{Results/files_txt/model_0_marginalR_sigma_mu_0.txt};
\addlegendentry{$\mu=0$ (approximation)}

\addplot3[thick, green, only marks, mark=o, mark options=solid, mark repeat=18] 
table[x=Var1, y=Var2, z=Separable_r, col sep=comma]
{Results/files_txt/model_0_marginalR_sigma_mu_0.txt};
\addlegendentry{\hspace{-0.8mm} Separable \cite{praj25journal}}
\end{axis}

\end{tikzpicture}
\caption{Marginal density at the receiver for different values of the non-separability parameter.}\label{fig:Fig1}
\end{figure}
\begin{figure}[t]
\centering
\begin{tikzpicture}
   \begin{axis}[
	width=7cm,
	height=4.75cm,
	xmin=0, xmax=35,
	ymin=0.4, ymax=1.6,
	xlabel={$\mu$},
	ylabel={$D_{\textrm{KL}}(f_{\Theta_{\textrm{r}}}\|u_{\Theta_{\textrm{r}}})$},
    xtick={0,5,...,35},
	ytick={0.4,0.8,...,1.6},
	xlabel near ticks,
	ylabel near ticks,
	x label style={font=\footnotesize},
	y label style={font=\footnotesize},
	ticklabel style={font=\footnotesize},
	legend style={at={(0.01,0.01)}, anchor=south west},
	legend style={font=\scriptsize, inner sep=1pt, fill opacity=0.75, draw opacity=1, text opacity=1},
	legend cell align=left,
	grid=both,
	title style={font=\scriptsize},
]  
\addplot[thick, red]
    table [x=lambda_values,y=KL_marg_exact_r, col sep=comma] {Results/files_txt/KL_Marginal_a_0.txt};
    \addlegendentry{Exact}
\addplot[thick, red, only marks, mark=star, mark options=solid, mark repeat=1] 
    table [x=lambda_values,y=KL_marg_asym_r, col sep=comma] {Results/files_txt/KL_Marginal_a_0.txt};
    \addlegendentry{Approximation}
    \end{axis}
\end{tikzpicture}
\caption{\ac{KL} divergence between the distribution associated with the marginal density in \eqref{eqn: marginal at receiver} and the uniform distribution versus non-separability parameter.}\label{fig:Fig2}
\end{figure}
\begin{figure*}[t] 
    \centering
    \begin{minipage}[t]{0.32\textwidth}
        \centering
        \begin{tikzpicture}
   \begin{axis}[
	width=6.15cm,
	height=4.75cm,
	xmin=0, xmax=3000,
	ymin=-50, ymax=-10,
	xlabel={Eigenvalue index},
	ylabel={Normalized eigenvalue [dB]},
    xtick={1,500,1000,1500,2000,2500,3000},
    /pgf/number format/1000 sep={},
	ytick={-50,-40,...,-10,0},
	xlabel near ticks,
	ylabel near ticks,
	x label style={font=\footnotesize},
	y label style={font=\footnotesize},
	ticklabel style={font=\footnotesize},
	legend style={at={(0.99,0.99)}, anchor=north east},
	legend style={font=\scriptsize, inner sep=1pt, fill opacity=0.75, draw opacity=1, text opacity=1},
	legend cell align=left,
	grid=both,
	title style={font=\scriptsize},
]  
\addplot[thick, red]
table [x=index,y=Eigen_nonsep_mu, col sep=comma] {Results/files_txt/EigenvaluesR_a_0.txt};
\addlegendentry{$\mu=20$}
\addplot[thick, blue] 
table [x=index,y=Eigen_sep_mu, col sep=comma] {Results/files_txt/EigenvaluesR_a_0.txt};
\addlegendentry{$\mu=0$}
\addplot[thick, green, only marks, mark=o, mark options=solid, mark repeat=200] 
table [x=index,y=Eigen_sep_ref, col sep=comma] {Results/files_txt/EigenvaluesR_a_0.txt};
\addlegendentry{\hspace{-0.8mm} Separable \cite{praj25journal}}
\addplot[thick, magenta, densely dotted, mark=x, mark options=solid, mark repeat=30] 
table [x=index,y=Eigen_nonsep_ref, col sep=comma] {Results/files_txt/EigenvaluesR_a_0.txt};
\addlegendentry{\hspace{-0.8mm} Non-separable \cite{zhang2025fundamental}}
\end{axis}
\end{tikzpicture} \vspace{-2mm}
        \caption{Normalized eigenvalues of $\R$.}\label{fig:Fig3}
    \end{minipage}
    \hfill
    \begin{minipage}[t]{0.32\textwidth}
        \centering
        \begin{tikzpicture}
\begin{axis}[
     clip=false,
	width=6.15cm,
	height=4.75cm,
	xmin=-10, xmax=30,
	ymin=0, ymax=600,
	xlabel={\textrm{snr} [dB]},
	ylabel={$C_{\H}$ [bit/s/Hz]},
    xtick={-10,-5,...,30},
	xlabel near ticks,
	ylabel near ticks,
	x label style={font=\footnotesize},
	y label style={font=\footnotesize},
	ticklabel style={font=\footnotesize},
	legend style={at={(0.01,0.99)}, anchor=north west},
	legend style={font=\scriptsize, inner sep=1pt, fill opacity=0.75, draw opacity=1, text opacity=1},
	legend cell align=left,
	grid=both,
	title style={font=\scriptsize},
]  
\addplot[thick, red]
table [x=SNR_dB,y=Capacity_nonsep_mean, col sep=comma] {Results/files_txt/SE_SNR_mu_20_a_0.txt};
\addlegendentry{$\mu=20$}
\addplot[thick, blue] 
table [x=SNR_dB,y=Capacity_sep_mean, col sep=comma] {Results/files_txt/SE_SNR_mu_20_a_0.txt};
\addlegendentry{$\mu=0$}
\addplot[thick, green, only marks, mark=o, mark options=solid, mark repeat=2] 
table [x=SNR_dB,y=Capacity_sep_lit_mean, col sep=comma] {Results/files_txt/SE_SNR_mu_20_a_0.txt};
\addlegendentry{\hspace{-0.8mm} Separable \cite{praj25journal}}
\addplot[thick, magenta, densely dotted, mark=x, mark options=solid, mark repeat=5] 
table [x=SNR_dB,y=Capacity_iso_joint_lit_mean, col sep=comma] {Results/files_txt/SE_SNR_mu_20_a_0.txt};
\addlegendentry{\hspace{-0.8mm} Non-separable \cite{zhang2025fundamental}}

\addplot[thick, black] 
table [x=SNR_dB,y=Capacity_nonsep_ub_mean, col sep=comma] {Results/files_txt/SE_SNR_mu_20_a_0.txt};
\addlegendentry{Upper bound ($\mu=20$)}
\end{axis}
\end{tikzpicture} \vspace{-2mm}
        \caption{Ergodic spectral efficiency versus \ac{SNR}.}\label{fig:Fig4}
    \end{minipage}
    \hfill
    \begin{minipage}[t]{0.32\textwidth}
       \centering
        \begin{tikzpicture}
   \begin{axis}[
	width=6.15cm,
	height=4.75cm,
	xmin=0, xmax=35,
	ymin=100, ymax=500,
	xlabel={$\mu$},
	ylabel={$C_{\H}$ [bit/s/Hz]},
    xtick={0,5,...,35},
	ytick={100,200,...,500},
	xlabel near ticks,
	ylabel near ticks,
	x label style={font=\footnotesize},
	y label style={font=\footnotesize},
	ticklabel style={font=\footnotesize},
	legend style={at={(0.01,0.99)}, anchor=north west},
	legend style={font=\scriptsize, inner sep=1pt, fill opacity=0.75, draw opacity=1, text opacity=1},
	legend cell align=left,
	grid=both,
	title style={font=\scriptsize},
]  
\addplot[red,thick]
table[ x=mu,y=Capacity_nonsep,col sep=comma] {Results/files_txt/SE_mu_SNR_20_a_0.txt};
\addlegendentry{Monte Carlo}

\addplot[thick, black] 
table [x=mu,y=Capacity_noniso_joint_ub, col sep=comma] {Results/files_txt/SE_mu_SNR_20_a_0.txt};
\addlegendentry{Upper bound}

\addplot[red, thick]
table[ x=mu,y=Capacity_nonsep,col sep=comma] {Results/files_txt/SE_mu_SNR_10_a_0.txt};

\addplot[thick, black] 
table [x=mu,y=Capacity_noniso_joint_ub, col sep=comma] {Results/files_txt/SE_mu_SNR_10_a_0.txt};

\draw[black, thin](axis cs:25,340) ellipse [x radius=1, y radius=55];
\node[anchor=south] at (25,400) {\footnotesize{$\textrm{snr}=20$~dB}};

\draw[black, thin](axis cs:25,160) ellipse [x radius=1, y radius=35];
\node[anchor=south] at (25,200) {\footnotesize{$\textrm{snr}=10$~dB}};

\end{axis}
\end{tikzpicture} \vspace{-2mm}
        \caption{Ergodic spectral efficiency versus non-separability parameter, with $\textrm{snr} \in \{10, 20\}$~ dB.}\label{fig:Fig5}
    \end{minipage}
\end{figure*}

\section{Numerical Results and Discussion}
\label{sec: Numerical Results}

We consider holographic lines of lengths $L_{\textrm{s}} = L_{\textrm{r}} =64\lambda$, with $\lambda=0.01$~m (corresponding to a carrier frequency of $30$~GHz). The scattering environment is characterized by the mean scattering angles $\bar{\theta}_{\textrm{s}}=90^\circ$ and $\bar{\theta}_{\textrm{r}}=60^\circ$ and the normalized circular variances $\nu_{\textrm{s}}^2=0.03$ and $\nu_{\textrm{r}}^2=0.025$, as in \cite{pizzo2022fourier}. The corresponding concentration parameters are $\alpha_{\textrm{s}}=34$, obtained by solving the fixed-point equation $\nu_{\textrm{s}}^2=1-\big(\frac{I_1(\alpha_{\textrm{s}})}{I_0(\alpha_{\textrm{s}})}\big)^2$, and $\alpha_{\textrm{r}}=40$, obtained analogously from $\nu_{\textrm{r}}^2$. We consider only forward-traveling waves, i.e., $\theta_{\textrm{r}}\in[0, \pi)$.

Fig.~\ref{fig:Fig1} illustrates the marginal density at the receiver for $\mu=\{0,20\}$, obtained from \eqref{eqn: marginal at receiver}, and its large-concentration approximation. The exact and approximate densities agree closely. The case of $\mu=0$ corresponds to separable scattering and serves as a baseline hereafter. As established in Section~\ref{sec:Marginal Density, KL Divergence and Monotonicity}, the marginal density becomes flatter for $\mu=20$. Fig.~\ref{fig:Fig2} quantifies this behavior by showing the monotonic decrease of the \ac{KL} divergence from the uniform distribution. Fig.~\ref{fig:Fig3} plots the eigenvalues of the spatial autocorrelation matrix $\R$, normalized by its trace. The eigenvalues become more evenly distributed as $\mu$ increases from $0$ to $20$. In contrast, the non-separable scattering model based on \cite{zhang2025fundamental} yields the most uneven eigenvalue distribution, characterized by a rapid decay.

Fig.~\ref{fig:Fig4} plots the ergodic spectral efficiency in \eqref{eqn: Spectral Efficiency}, obtained via Monte Carlo simulations, versus the \ac{SNR}, showing improved performance as $\mu$ increases from $0$ to $20$. The upper bound in \eqref{eqn: upper bound Spectral Efficiency1} is fairly tight, especially at low \ac{SNR}, whereas the non-separable scattering model based on \cite{zhang2025fundamental} yields the lowest spectral efficiency due to its highly uneven eigenvalue distribution. Lastly, Fig.~\ref{fig:Fig5} depicts the ergodic spectral efficiency versus $\mu\in[0,35)$, with $\textrm{snr} \in \{10, 20\}$~dB,~where~both~the~upper bound and Monte Carlo results increase monotonically with $\mu$.

The above numerical results are consistent with the theoretical findings in Section~\ref{sec: Analysis of Non-Separable Scattering}. Future work may focus on deriving closed-form monotonicity results for the ergodic spectral efficiency at high \ac{SNR} and extending the analysis to holographic surfaces.




\appendices

\section{Proof of Proposition~\texorpdfstring{\ref{prop:flatten}}{}}
\label{sec: Majorization Order}

Since the integrand in \eqref{eqn: marginal row sum1} is the \ac{PDF} of the transformed variable $u_{\textrm{r}}=\cos(\theta_{\textrm{r}})$, applying the first-order delta method \cite{casella2024statistical} to $\zeta(\Theta_{\textrm{r}})=\cos(\Theta_{\textrm{r}})$, with ${\Theta_{\textrm{r}}}\overset{\textrm{approx.}}{\sim}\setN\big(\bar{\theta}_{\textrm{r}},\frac{\alpha_{\textrm{s}}}{\alpha_{\textrm{s}}\alpha_{\textrm{r}}-\mu^2}\big)$ (see Section~\ref{sec:Marginal Density, KL Divergence and Monotonicity}), gives $\zeta(\Theta_{\textrm{r}})\overset{\textrm{approx.}}{\sim}\setN\big(\zeta(\bar{\theta}_{\textrm{r}}),\big(\zeta'(\bar{\theta}_{\textrm{r}})\big)^2\frac{\alpha_{\textrm{s}}}{\alpha_{\textrm{s}}\alpha_{\textrm{r}}-\mu^2}\big)$. Therefore, the \ac{PDF} of $\zeta(\Theta_{\textrm{r}})$ is given by
\begin{align}
\label{eqn: pdf costheta}
    g_{\zeta(\Theta_{\textrm{r}})}(u_\textrm{r})
    =
    \frac{1}{\sqrt{2\pi}\rho}
    \exp\bigg(
    -\frac{(u_\textrm{r}-u_0)^2}{2\rho^2}
    \bigg),
\end{align}
with $u_0 \! = \! \cos{\bar{\theta}_{\textrm{r}}}$ and $\rho^2 \! = \! \sin^2\bar{\theta}_{\textrm{r}}\frac{\alpha_{\textrm{s}}}{\alpha_{\textrm{s}}\alpha_{\textrm{r}}-\mu^2}$. Thus, we have $\frac{\diff \rho^2}{\diff \mu} \! > \! 0$, i.e., $\rho^2$ is strictly increasing for $\mu\in[0,\mu_*)$ and $\bar{\theta}_{\textnormal{r}}\notin\{0,\pi\}$.

The \ac{PDF} in \eqref{eqn: pdf costheta} is symmetric about $u_0$ and strictly decreasing with respect to $|u_\textrm{r}-u_0|$. Accordingly, from \eqref{eqn: marginal row sum1}, a bin closer to $u_0$ has a larger probability mass than one farther from $u_0$. Let $\big\{v_{[i]}(\mu)\big\}_{i=1}^{n_{\textrm{r}}}$ be sorted in decreasing order and define $S_\ell(\mu)=\sum_{i=1}^{\ell}v_{[i]}(\mu)$. The $\ell$ bins that have the largest mass must form a contiguous block containing $u_0$. Hence, we have
\begin{align}
    \label{eq:proof}
    S_\ell(\mu)=\max_{j\in\setJ_\ell}
    \underbrace{\int_{a_j}^{b_j}g_{\zeta(\Theta_{\textrm{r}})}(u_{\textrm{r}})\diff u_{\textrm{r}}}_{= \ \Gamma (\rho)},
\end{align}
where $\setJ_\ell$ denotes the set of contiguous $\ell$-bin blocks $[a_j,b_j]$ containing $u_0$. The integral in \eqref{eq:proof} can be written as
\begin{align}
\label{eqn: Phi function int}
    \Gamma (\rho) = \Phi\bigg(\frac{b_j-u_0}{\rho}\bigg) + \Phi\bigg(\frac{u_0-a_j}{\rho}\bigg)-1,
\end{align}
and differentiating with respect to $\rho$ yields
\begin{align}
\begin{split}
    \hspace{-2mm} \frac{\diff}{\diff\rho} \Gamma (\rho) = -\frac{b_j \! - \! u_0}{\rho^2}\phi\bigg(\frac{b_j \! - \! u_0}{\rho}\bigg) \! - \! \frac{u_0-a_j}{\rho^2}\phi\bigg(\frac{u_0 \! - \! a_j}{\rho}\bigg).
\end{split}
\end{align}
Now, exploiting $b_j>u_0>a_j$ and $\phi(\cdot)>0$, we have $\frac{\diff}{\diff\rho}\int_{a_j}^{b_j}g_{\zeta(\Theta_{\textrm{r}})}(u_{\textrm{r}})\diff u_{\textrm{r}}<0$. In addition, $\frac{\diff\rho}{\diff\mu}>0$ implies $\frac{\diff}{\diff\mu}\int_{a_j}^{b_j}g_{\zeta(\Theta_{\textrm{r}})}(u_{\textrm{r}})\diff u_{\textrm{r}}<0$. Hence, every candidate $\ell$-bin block containing $u_0$ has decreasing probability mass as $\mu$ increases. Consequently, we have $S_\ell(\mu)>S_\ell(\mu')$ for $\mu<\mu'$ or, equivalently, $\sum_{i=1}^{\ell}v_{[i]}(\mu)>\sum_{i=1}^{\ell}v_{[i]}(\mu'),\quad\forall \ell=1,\ldots,n_{\textrm{r}}-1$, which implies $\v(\mu)\succ\v(\mu')$ \cite{marshall1979inequalities}. \hfill \IEEEQED

\addcontentsline{toc}{chapter}{References}
\bibliographystyle{IEEEtran}
\bibliography{refs_abbr, refs}

@book{balanis2016antenna,
  title={Antenna Theory: Analysis and Design},
  author={Balanis, Constantine A},
  year={2016},
  publisher={John Wiley \& Sons}
}

@article{pizzo2020spatially,
  title={Spatially-stationary model for holographic {MIMO} small-scale fading},
  author={Pizzo, Andrea and Marzetta, Thomas L and Sanguinetti, Luca},
  journal=J_IEEE_JSAC,
  volume={38},
  number={9},
  pages={1964--1979},
  year={2020},
  publisher={IEEE}
}

@article{pizzo2022fourier,
  title={Fourier plane-wave series expansion for holographic {MIMO} communications},
  author={Pizzo, Andrea and Sanguinetti, Luca and Marzetta, Thomas L},
  journal=J_IEEE_TWC,
  volume={21},
  number={9},
  pages={6890--6905},
  year={2022},
  publisher={IEEE}
}

@article{sanguinetti2022wavenumber,
  title={Wavenumber-division multiplexing in line-of-sight holographic {MIMO} communications},
  author={Sanguinetti, Luca and D’Amico, Antonio Alberto and Debbah, Merouane},
  journal=J_IEEE_TWC,
  volume={22},
  number={4},
  pages={2186--2201},
  year={2022},
  publisher={IEEE}
}

@article{zhang2025fundamental,
  author={Zhang, Xin and Song, Shenghui and Letaief, Khaled B.},
  journal=J_IEEE_TIT, 
  title={Fundamental Limits of Non-Centered Non-Separable Channels and Their Application in Holographic {MIMO} Communications}, 
  year={2025},
  volume={71},
  number={9},
  pages={6870--6894}
}

@article{huang2020holographic,
  title={Holographic {MIMO} surfaces for {6G} wireless networks: {Opportunities}, challenges, and trends},
  author={Huang, Chongwen and others},
  journal=J_IEEE_WC,
  volume={27},
  number={5},
  pages={118--125},
  year={2020},
  publisher={IEEE}
}

@article{wei2026electromagnetic,
  title={Electromagnetic information theory for holographic {MIMO} communications},
  author={Wei, Li and others},
  journal=J_IEEE_CST,
  year={2026},
  volume={28},
  pages={6211-6240},
  publisher={IEEE}
}

@article{praj25journal,
  title={{LoS+NLoS} Holographic {MIMO}: {Analysis} and Application of Wavenumber-Division Multiplexing},
  author= {Prajapati, A. and Dharmawansa, P. and Di Renzo, M. and Atzeni, I.},
  journal=J_IEEE_TWC,
  year={early access, 2026},
}

@article{Atz25,
	author = {I. {Atzeni} and others},
	title = {Sub-{THz} communications: {Perspective} and results from the {Hexa-X-II} project},
	journal = J_IEEE_OJCS,
	volume = {6},
	pages = {7495--7540},
	year = {2025}}

@article{zhang2023capacity,
  title={Capacity analysis of holographic {MIMO} channels with practical constraints},
  author={Zhang, Yuan and others},
  journal=L_IEEE_WCL,
  volume={12},
  number={6},
  pages={1101--1105},
  year={2023},
  publisher={IEEE}
}

@misc{3gpp38901v1700,
  author      = {{3GPP}},
  title       = {Study on channel model for frequencies from 0.5 to 100 {GHz}},
  institution = {3rd Generation Partnership Project},
  note        = {38.901 (v17.0.0)},
  year        = {2022}
}

@article{kermoal2002stochastic,
  title={A stochastic {MIMO} radio channel model with experimental validation},
  author={Kermoal, Jean-Philippe and others},
  journal=J_IEEE_JSAC,
  volume={20},
  number={6},
  pages={1211--1226},
  year={2002},
  publisher={IEEE}
}

@article{weichselberger2006stochastic,
  title={A stochastic {MIMO} channel model with joint correlation of both link ends},
  author={Weichselberger, Werner and others},
  journal=J_IEEE_TWC,
  volume={5},
  number={1},
  pages={90--100},
  year={2006},
  publisher={IEEE}
}

@article{demchuk2002miscellanea,
  title={Probabilistic Model for Two Dependent Circular Variables},
  author={Singh, Harshinder and Hnizdo, Vladimir and Demchuk, Eugene},
  journal={Biometrika},
  volume={89},
  no={3},
  pages={719--723},
  year={2002}
}

@article{mardia2007bivariate,
  title={Bivariate von {Mises} densities for angular data with applications to protein bioinformatics},
  author={Mardia, Kanti V and Taylor, Charles C and Subramaniam, Ganesh K},
  journal={Biometrics},
  volume={63},
  pages={505--512},
  year={2007}
}

@article{savaux2020computation,
  title={On the computation of integrals of bivariate {Gaussian} distribution},
  author={Savaux, Vincent and Le Magoarou, Luc},
  journal=C_ITG_WSA,
  year={2020},
  organization={IEEE}
}

@book{marshall1979inequalities,
  title={Inequalities: {Theory} of {Majorization} and its {Applications}},
  author={Marshall, Albert W and Olkin, Ingram and Arnold, Barry C},
  year={1979},
  publisher={Springer}
}

@book{casella2024statistical,
  title={Statistical {Inference}},
  author={Casella, George and Berger, Roger},
  year={2024},
  publisher={CRC press}
}

@STRING{J_IEEE_CST			= "IEEE Commun. Surveys and Tuts."}

@STRING{J_IEEE_TIT			= "IEEE Trans. Inf. Theory"}

@STRING{J_IEEE_JSAC			= "IEEE J. Sel. Areas Commun."}

@STRING{J_IEEE_OJCS			= "IEEE Open J. Commun. Soc."}

@STRING{J_IEEE_WC			= "IEEE Wireless Commun."}

@STRING{J_IEEE_TWC			= "IEEE Trans. Wireless Commun."}

@STRING{L_IEEE_WCL			= "IEEE Wireless Commun. Lett."}

@STRING{C_ITG_WSA			= "Proc. Int. ITG Workshop Smart Antennas (WSA)"}

@STRING{C_ITG_WSA			= "Proc. IEEE Symp. on Comput. and Commun. (ISCC)"}

\end{document}